\author[1]{Saar Tochner}
\author[2]{Stefan Schmid}
\affil[1]{The Hebrew University of Jerusalem}
\affil[2]{Faculty of Computer Science, University of Vienna}
\newtheorem{lemma}{Lemma}
\newtheorem{definition}{Definition}
\newtheorem{example}{Example}
\newcommand{\boldheader}[1]{\vskip 5pt \noindent{\bf #1}}
\begin{document}

\title{Routing with Selfish Knowledge:\\
Making trampoline nodes scale \& secure}

\title{On Confidential Routing\\ in Large-Scale Offchain Networks}

\title{On Search Friction of Route Discovery in Offchain Networks}

\maketitle


\begin{abstract}
Offchain networks provide a promising solution to overcome
the scalability challenges of cryptocurrencies.
However, design tradeoffs of offchain networks are still
not well-understood today. 
In particular, offchain networks typically 
rely on fees-based incentives
and hence require mechanisms for the 
efficient discovery of ``good routes'':
 routes with low fees (cost efficiency)
 and a high success rate of the transaction routing
 (effectiveness). 
 Furthermore the route discovery should be confidential (privacy),
and e.g., not reveal information about who transacts with whom
or about the transaction value. 
This paper provides  an analysis of the 
``search friction'' of route discovery,
i.e., the costs and tradeoffs of route discovery in 
large-scale offchain networks in which nodes
behave strategically. 
As a case study, we consider the Lighning network
and the route discovery service
provided by the trampoline nodes,
evaluating the tradeoff in different scenarios
also empirically. 
Finally, we initiate the discussion of alternative charging schemes
for offchain networks.
\end{abstract}




\section{Introduction}

Despite the high popularity of cryptocurrencies, 
it remains a challenge to make fast payments at scale.
This is mainly due to the inefficiency of the underlying
consensus protocol: it can take several minutes
until a transaction went through a full consensus 
and can be confirmed.
A promising solution are emerging payment channel networks
such as the Lightning network,
which allow to perform transactions off-chain and
in a peer-to-peer fashion: 
without requiring consensus on the blockchain.
In a nutshell, a payment channel is a cryptocurrency transaction 
which escrows or dedicates money
on the blockchain for exchange with a given user and duration. 
Users can also interact if they do not share a direct payment channel:
they can route transactions through \emph{intermediaries}. 

However, the design of 
secure and scalable offchain networks is 
challenging and still not well-understood.
In particular, these networks must not only
be scalable but also account for strategic (i.e., selfish)
user behavior; it must further be ensured that
these networks do not introduce new 
security issues.
A common approach to
incentivize network nodes (the intermediaries) to contribute
to the transaction routing
is to use a fee-based mechanism: 
intermediaries can charge nodes which route through them a
nominal fee. 
This is also the approach taken in the 
Lightning network which serves us as a case study in
this paper.

This raises the question of how nodes
can \emph{discover} routes through intermediaries.
One aspect here is cost efficiency:
since different routes come at different
fees, nodes require scalable mechanisms to
find ``short'' (i.e., low-cost) routes.
However, routes do not only have to be
cheap but also provide sufficient liquidity
to route the transaction: 
the route discovery mechanism should ensure a
high success rate of the transaction routing;
this property is known as effectiveness in
the literature. 
Effectiveness is not only a performance concern:
a lengthy discovery process may also jeopardize
privacy, potentially leaking information about
who aims to transact (i.e., find a route) with whom.
Last but not least, the route discovery should be 
incentive-compatible, e.g., account
that nodes are only willing to distribute routes
from which they can benefit (e.g., which go through
themselves).

Providing an effective and scalable route discovery
is particularly challenging as large-scale off-chain networks are 
expected to be highly dynamic, 
e.g., due to the frequent changes of channels and fees.
This renders solutions
requiring wallets to keep up-to-date state information
about the networks infeasible. 
An interesting recent solution to reduce the burden
on wallets, is the deployment of route discovery servers, such 
as the trampoline nodes in Lightning: 
these servers maintain routes so that a wallet just
needs to know how to reach the route server nodes
in its neighborhood and can then request
the desired route. 

This paper provides an analysis of the efficiency-privacy
tradeoff of off-chain route discovery, considering
the Lightning network as a case study. In particular, 
we investigate to which extent route discovery
can be efficient and effective, incentive compatible and 
confidential.
Here, confidentiality is about more than just the actual
data that is communicated in the discovery process, e.g.,
the source, the destination, or the transaction size;
it is also about the possible metadata that is communicated
implicitly, e.g., about the rate or time at which transactions
occur. In fact, existing cryptographic techniques 
such as \cite{hu2014private} can be used
to provide data confidentiality, however,
as we will show in this paper, nodes may
still leak information about the frequency of transactions,
i.e., about their \emph{transaction rate}, to other nodes which are
not on the transaction route.

We quantify the ``search friction'', i.e., the cost
of the route discovery process, both analytically, deriving
cost lower bounds, as well as empirically, considering real
payment channel networks. 
Our results motivate research into alternative economic
models to provide routing incentives which come at lower
search friction costs, which we also start to discuss in this paper. 

The remainder of this paper is organized as follows,
see also Figure~\ref{figure::venn_organize}.
We introduce a model for route discovery 
and provide a formal analysis in Section~\ref{sec::model},
and then report on our empirical results in 
Section~\ref{sec::experiments}. 
After discussing related work in 
Section~\ref{sec::related_work}, we conclude our contribution in
Section~\ref{sec::conculsions}. 

\begin{figure}
	\centering
	\includegraphics[scale=0.5]{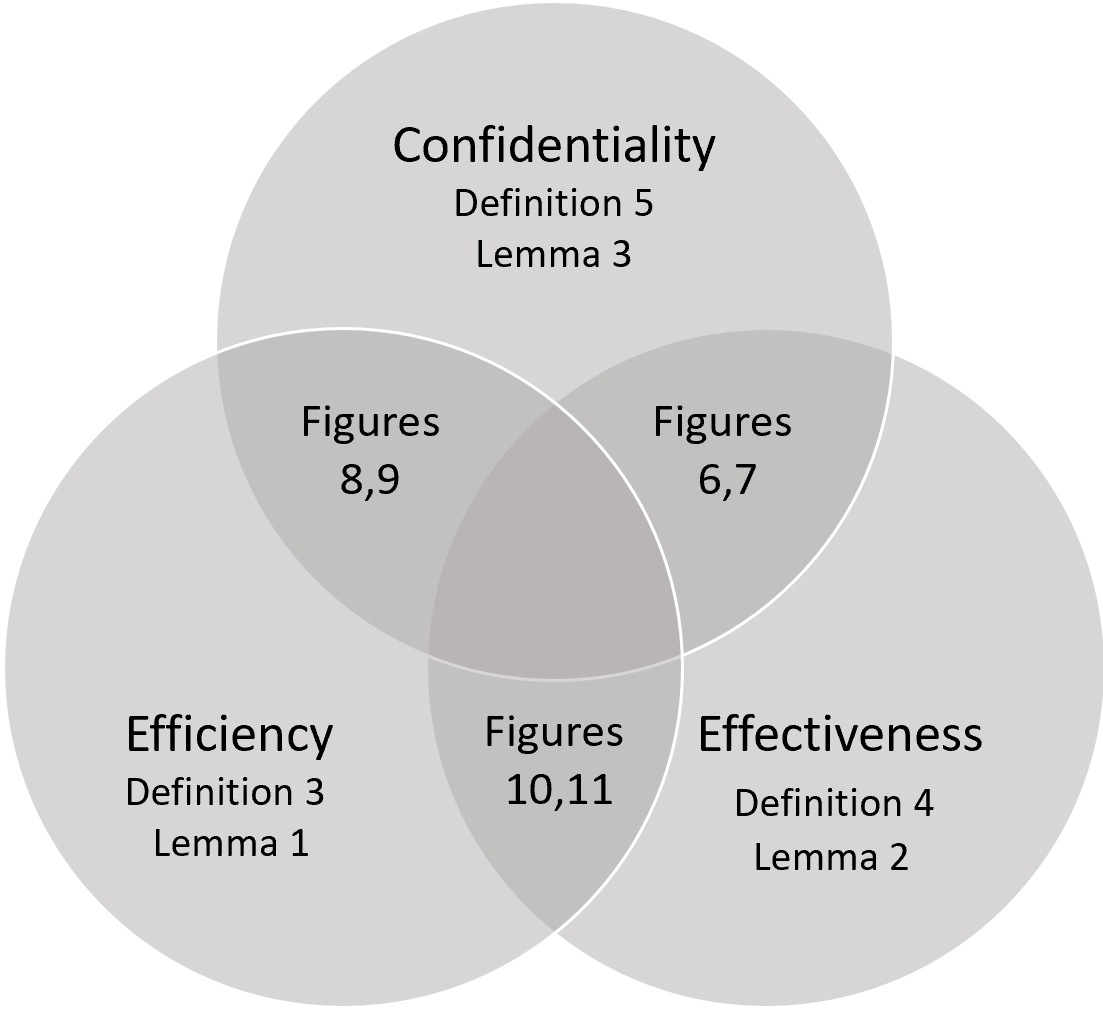}
	\caption{Paper organization}
	\label{figure::venn_organize}
\end{figure}

\section{Tradeoff Analysis of Route Discovery} \label{sec::model}

In order to analyze the search friction and tradeoffs of route
discovery, we consider the following simplified model.
We model the offchain network 
as a graph of \emph{channels}:
two nodes can create (and may later delete) a payment channel
between each other, to perform offchain transactions.
This channel may also be used by other nodes for
routing their transactions indirectly, i.e., using
multi-hop routing.
In this paper, we are primarily interested in two properties of a
channel: the fee other nodes are charged to use this channel,
and the channel capacity (resp. liquidity), which determines the size
of the transactions it can support. This is a simplification
of actual systems such as Lightning, but captures
their essence.\footnote{E.g., nodes can set min/max
values for channels, create non-public channels which
cannot be used for multihop routing, etc.}

\begin{definition}[Offchain Network]
    The offchain network is 
    a weighted, directed graph $G(V,E)$, where $V$ are the nodes in the network 
    and $E$ are the channels. A channel $e \in E$ is characterized 
    by a weight  $w_e \in \mathbb{R_{+}}$ 
    and by a capacity $w_{cap}$.
\end{definition}

The weight $w_e$ is kept general here but typically is a function of 
the fee that the source node pays for an intermediate channel, 
the channels' age (older channels may be assumed to be more reliable), 
or previous knowledge (e.g., channels which failed in the past), among other. 
    
The network is dynamic, i.e., channels may be added and removed over time,
fees updated, etc., and hence nodes require mechanisms to
discover the topology and learn about updates to be able to route
their transactions through the offchain network, either explicitly or implicitly.  For example,
Lightning includes gossiping and active probing mechanisms
to allow nodes to learn about routing fees. It is
however more challenging to learn about the capacity $w_{cap}$;
such information is typically not distributed for privacy reasons
and hence, finding a path with sufficient capacity may require
trial and error \cite{icissp20}.

More formally, a route on a topology $G(V,E)$ from $s \in V$ to $t \in V$ is a list of channels $e_1. \cdots, e_n \in E$ such that the source node of $e_1$ is $s$, and the target node of $e_n$ is $t$ and for every $i$, the target of $e_i$ is the source of $e_{i+1}$. 
A transaction in the network is a payment from a source to a target along a 
``valid route'': A route can serve a transaction of size $l$ if every channel $e$ along the route has enough capacity, i.e. $e_{cap} \ge l$. 
We assume that the weight of a route is simply the sum of the weights of 
its channels.

Inspired by existing offchain networks such as Lightning,
we distinguish between two types of nodes:
wallets and ``regular nodes'', henceforth called trampoline nodes (TNs). 
Wallets are simple nodes and do not have the resources
to store (and maintain!) information about the entire network.
Rather, they need to rely on the trampoline nodes which know
the network and which may inform wallets about routes upon
request. We are interested in exactly this discovery process,
where wallets interact with one or multiple trampoline nodes
to find routes for their later transactions.

This route discovery process however introduces
the following challenges:
\begin{itemize}
      \item \emph{Strategic behavior and efficiency:} Trampoline nodes may
        act selfishly and may only have an incentive
        to share routes which include themselves,
        such that they can charge the fee.
        As a consequence, a wallet may not learn about
        the most efficient (i.e., lowest cost) route.
        
      \item \emph{Effectiveness:} Also related to the above,
      wallets may have to invest more resources into the discovery
      of efficient routes, exploring additional alternative trampoline
      nodes. The effectiveness of this route discovery process
      is further affected by the fact that not all the discovered
      routes may provide sufficient liquidity (i.e., capacity)
      for a large transaction which needs to be routed.
        
     \item \emph{Privacy:} Through the repeated interactions
     with multiple trampoline nodes, querying for
     specific routes, a wallet may reveal
     confidential information about its transactions.
\end{itemize}

We are interested in the following family of route
discovery algorithms:
\begin{definition}[Routing Discovery Algorithm (RDA)]
        A $q$-route discovery algorithm (RDA) 
        is an algorithm that given a pair $s, t \in V$,
        performs at most $q$ queries, issued to $q$ trampoline nodes, 
        and either returns a valid route or decides that this
				is not possible.
\end{definition}

We measure the quality, i.e., the efficiency, of a route found by the RDA, 
by comparing it to the \emph{optimal route} with respect to the 
weight function on the topology. 

\begin{definition} [Efficiency]
    The efficiency of a route $R$ from $s$ to $t$ is
		defined by the stretch, i.e., $\frac{w(R_{src, dst})}{w(O_{src, dst})}$, 
		where $w(\cdot)$ is the weight of the route, and $O_{src, dst}$ is the route with the minimal weight between  $s$ to $t$.
		The weight of a route is the sum of its link weights.
\end{definition}

As discussed above, some routes in off-chain networks may be temporarily 
unavailable (e.g., due to offline nodes or lack of liquidity) and thus invalidate the result of the RDA.
Another important metric to evaluate RDA hence concerns the number
of queries it needs to issue until a valid route
is discovered. 
For example, in the Lightning network, an available route is searched
as part of the route initialization procedure. This process locks 
the channels along the route (a designated amount) and 
the channel commits 
to participate in the transaction.
\begin{definition} [Effectiveness]
   The effectiveness of an RDA is 
	the number of queries which have to be issued
	to successfully execute a given transaction.
\end{definition}

Furthermore, as transactions are privacy critical,
the RDA should 
not leak any confidential information.
Naturally, a first concern regards the information provided 
by the query directly,
including e.g., source and destination nodes,
potentially the transaction size, the resulting routes, etc.
As discussed above, today we understand fairly well
how to protect such information, e.g.,
using homomorphic encryption schemes
and private information retrieval \cite{morais2018zero,homenc,de2010practical,hu2014private}.
However, there is another concern, related to
the meta-data revealed from the query, 
e.g., 
        the timestamp or even the existence of the 
				route discovery query itself. 
While there also exist solutions to metadata private messaging 
systems, e.g., \cite{lazar2018karaoke,lazar2016alpenhorn,tyagi2017stadium}), we
will show in the following that there is an inherent limitation
what can be achieved in terms of an efficient and confidential 
route discovery.
To this end, we introduce the notion of 
\emph{leak rate}:
to what extent can a node learn about the number
of transactions in the network in a given time unit?
That is, the leak rate is defined
    as the number of transactions
    in a single time unit that a node can learn about
    for a given set of transactions $T$ 
    under a given route discovery algorithm $A$.
\begin{definition}[Leak Rate]
    An RDA $A$ leaks at rate $k$ if 
    in order to route a transaction,
		$k$ times more nodes will learn about the existence
		of this transaction compared to a scenario
		where the transaction is simply routed along the
		shortest path (e.g., using source routing). 
\end{definition}

To clarify and motivate this notion,
we give an example in 
Figure \ref{figure::reate_leaks_illustration}. 
In this simple network, a node 
learns about a transaction
it should in principle have no idea about. 
\begin{figure}
	\centering
	\includegraphics[scale=0.5]{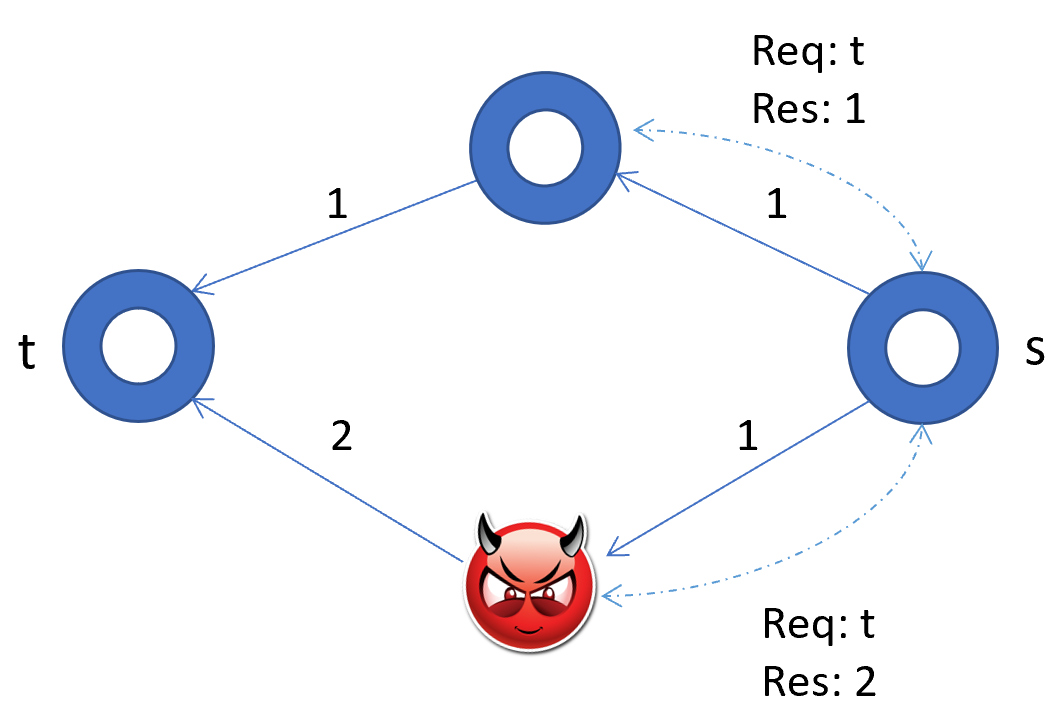}
	\caption{An attacker can learn about 
	a transaction although 
	the request itself does not hold any information.}
	\label{figure::reate_leaks_illustration}
\end{figure}

\boldheader{Efficiency.}
With these concepts in mind, we now 
first analyze the efficiency achievable
by route discovery algorithms.
The following lemma shows an inherent
cost of the route discovery process
in the off-chain model
\begin{lemma} \label{lemma:efficiency}
    For every $q$-RDA and every $M \in \mathbb{R}$ 
		there exists a topology in which an RDA will return a route with weight 
		$M$ times higher than the optimal route, or it will not return a route at all.
\end{lemma}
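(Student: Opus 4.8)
The plan is an adaptive-adversary argument that exploits the two constraints baked into the model: the bounded query budget, and the selfishness of trampoline nodes (a node, when queried, discloses only $s$-to-$t$ routes that pass through itself). We may assume $M>1$, since otherwise the statement is vacuous: every route has weight at least that of the optimal route. Fix an arbitrary $q$-RDA $A$. We will describe a family of topologies parameterized by a single ``cheap'' index, answer the queries of $A$ online (so that the construction also works against adaptive algorithms), and at the end commit to one member of the family that is consistent with every answer we gave.

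The base topology consists of the source $s$, the target $t$, and $q+1$ trampoline nodes $T_0,\dots,T_q$, with channels $s\to T_i$ and $T_i\to t$ for every $i$, and with all channel capacities set large enough that liquidity is never the bottleneck. The only $s$-to-$t$ routes in this graph are the $q+1$ two-hop routes $s\to T_i\to t$, and by selfishness the only route that $T_i$ will ever report is its own route $s\to T_i\to t$; we are free to fix the weight of that route to any positive value by splitting it across its two channels. Whenever $A$ issues its $j$-th query, to some node $T_{i_j}$, the adversary answers that the unique route through $T_{i_j}$ is $s\to T_{i_j}\to t$ with weight exactly $M$. Because $A$ performs at most $q$ queries, there is an index $j^{*}\in\{0,\dots,q\}$ that is never queried.

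We now commit to the topology in which the route through $T_{j^{*}}$ has weight $1$ while the route through every other $T_i$ has weight $M$; this is consistent with all of the adversary's (uniform) answers. In this topology the optimal route is $s\to T_{j^{*}}\to t$, of weight $1$, whereas every route that $A$ could possibly have learned about has weight $M$. Hence $A$ either returns one of those routes, whose stretch is exactly $M=M\cdot w(O_{s,t})$, or it returns no route at all — precisely the dichotomy claimed. For a randomized $A$, drawing the cheap index uniformly from $\{0,\dots,q\}$ forces $A$ to miss it with probability at least $1/(q+1)>0$, so by averaging some fixed topology yields the bad outcome with positive probability; we state the lemma for the worst case. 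I expect the only real subtlety to be isolating \emph{why} $A$ cannot recover the cheap route without querying $T_{j^{*}}$: this rests squarely on the selfishness assumption (a truthful queried node could otherwise reveal a competitor's route). The online nature of the adversary, by contrast, is disarmed automatically by always answering ``weight $M$'', which makes consistency with the final topology immediate.
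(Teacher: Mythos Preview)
Your argument is correct and shares the paper's core pigeonhole idea: with $q+1$ trampoline nodes and at most $q$ queries, one TN is never consulted, and selfishness keeps the cheap route through that TN hidden. The constructions differ in where the adversary's freedom lies. The paper fixes all weights up front---$s$ joined to $q+1$ TNs by weight-$0$ edges, each TN to a private outer node by a weight-$1$ edge, and the $q+1$ outer nodes linked into a weight-$M$ clique---and then picks the \emph{target} adversarially among the outer nodes; you instead fix the target and pick the \emph{weights} adversarially online. Your version is a bit more elementary (the outer clique is unnecessary once $t$ is fixed, so two-hop parallel routes suffice), and the uniform ``weight $M$'' answers make consistency with the final topology automatic, which cleanly handles adaptive query strategies. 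The paper's version, by contrast, leans on the sentence that the query order is independent of the target, an appeal to the symmetry of the $q+1$ candidate targets; your online-adversary framing sidesteps that step entirely.
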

\begin{proof}
    Consider 
		the topology in Figure~\ref{figure::non_optimal_illustration}.
    Given a $q$-RDA $A$, we build a topology in which $A$ will return a route 
		with weight larger than $M$ although there exists a route with weight $1$.
    In our topology, the source node, $s$, is in the center, and is connected 
		to $q+1$ TNs with channel weight $0$. Each TN is connected to one unique node with a channel of weight $1$, and all these nodes form a 
		clique (i.e., are connected to one another) by channels 
		of weight $M$. We will choose the target from one of these nodes.
    The RDA $A$ queries $q$ TNs in an order that is independent on target node (because $s$ does not know the topology). But there are $q+1$ possibilities to the target, therefore there exists a node in the outer circle, $t$, that $A$ does not query its direct TN neighbor. Choose this $t$ to be our target. As the TNs are selfish, therefore they will tell $A$ only about routes that go through themselves, and all of them are not directly connected to $t$; thus all the weights that $A$ sees are at least of weight $M+1$.
    Finally, $A$ will return either a route with weight $M+1$ or no route at all, although the actual shortest route is of weight $1$.
\end{proof}

\begin{figure}
	\centering
	\includegraphics[scale=0.5]{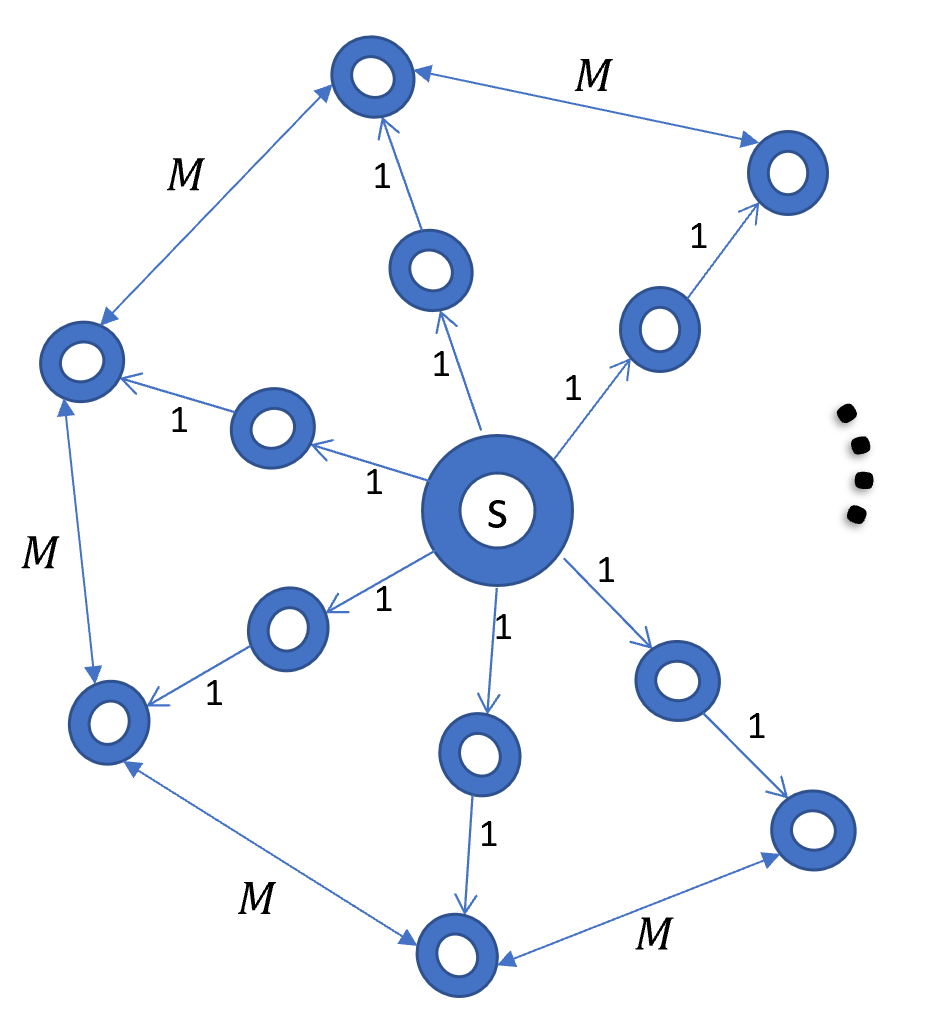}
	\caption{Example with high cost: If there are $q+1$ direct neighbors, 
	then any $q$-RDA struggles to find an efficient route.}
	\label{figure::non_optimal_illustration}
\end{figure}

In general, the efficiency will depend on the specific
topology. To give an example, consider the complete
network.

\begin{example}
In a clique where all the weights are equal, 
the efficiency, in terms of the stretch, is upper bounded by 2.
\end{example}

To see this, assume $r$ is the weight of each channel. 
The optimal route is the direct channel, which weighs $r$. 
On the other hand, the route to each TN and the route from the TN to each target are also $r$ (this is clique, so there exists a direct channel), 
which is $2r$ in total.

\boldheader{Effectiveness.}
We next consider the effectiveness
a route discovery algorithm can achieve. 
Also here, we first derive a negative result
for the general scenario.
\begin{lemma} \label{lemma:effectiveness}
    For every $q$-RDA
		and every $M \in \mathbb{R}$, there exists a topology in which the first $M$ routes from the algorithm will be unsuccessful.
\end{lemma}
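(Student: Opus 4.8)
The plan is to mirror the adversarial construction behind Lemma~\ref{lemma:efficiency}, but to attack \emph{capacity} instead of weight, exploiting the fact that capacities are hidden from wallets: since a wallet only ever learns whether a route it actually attempted succeeded or failed, the sequence of trampoline nodes an RDA consults cannot depend on the capacity assignment, so we are free to ``fill in'' the capacities \emph{after} watching the algorithm run.

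Concretely, fix a $q$-RDA $A$ and let $m = \lceil M \rceil$. Build a topology in the spirit of Figure~\ref{figure::non_optimal_illustration}: the source $s$ sits in the center and has a weight-$0$ channel to each of $m+1$ trampoline nodes $TN_1,\dots,TN_{m+1}$, and each $TN_i$ has a weight-$1$ channel to the common target $t$. Let $l$ be the size of the transaction to be routed. Because trampoline nodes are selfish, the only route any $TN_i$ ever advertises is the one through itself, namely $s \to TN_i \to t$; hence every route $A$ can obtain has this form, and the $m+1$ candidate routes are pairwise channel-disjoint except at $s$ and $t$. This disjointness is exactly what lets us set the capacity of each channel $TN_i \to t$ independently.

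Now simulate $A$ (invoking it repeatedly if a single run issues fewer than $m$ queries, since $M$ may exceed $q$). As long as every route attempted so far has failed, $A$'s behaviour is a fixed deterministic process, so it produces a well-defined order $TN_{i_1}, TN_{i_2}, \dots$ in which trampoline nodes are consulted; the only feedback $A$ gets is ``route failed'', which we will guarantee, and even if $A$ is told \emph{which} channel failed this is uninformative since all remaining candidate routes are structurally identical. Assign to the channel $TN_{i_j}\to t$ a capacity strictly smaller than $l$ for each $j \le m$, a capacity at least $l$ to $TN_{i_{m+1}}\to t$, and capacity at least $l$ to every channel incident to $s$. This is a consistent, fixed topology on which $A$ behaves exactly as simulated, and by construction each of the first $m \ge M$ routes it returns traverses a channel of capacity below $l$ and is therefore invalid, which proves the claim.

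The main obstacle, and the point that needs the most care, is justifying that $A$'s query order is genuinely independent of the capacities we are about to assign, i.e.\ that a wallet obtains no side information about $w_{cap}$ beyond the success/failure bit of an attempted route. This is precisely the modelling assumption that capacity is not gossiped; once it is granted, the ``simulate, then fill in the capacities'' argument goes through, and channel-disjointness of the $m+1$ routes guarantees the resulting capacity assignment is realizable. One should also note explicitly that ``the first $M$ routes'' here refers to routes obtained over possibly several invocations of the RDA, which is consistent with the effectiveness metric counting the total number of queries issued until the transaction succeeds.
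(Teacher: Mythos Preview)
Your argument is correct in spirit and exploits the same key observation as the paper---capacities are not gossiped, so one may fix them adversarially \emph{after} watching the algorithm run---but the construction you use is genuinely different.

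The paper attaches the source to a \emph{single} trampoline node that sits inside an $M$-clique (with the target hanging off another clique node). That one TN has exponentially many selfish routes to offer, all through itself; the adversary then declares some channel on each of the first $M$ offered routes unavailable. Because only one TN is ever queried, the bound $q$ on the number of queries is irrelevant, and there is no need to reason about repeated invocations. You instead fan the source out to $m+1$ trampoline nodes, each contributing exactly one channel-disjoint route to a common target, and kill the first $m$ of them. Your construction buys you a very clean consistency argument for the capacity assignment (disjointness makes it trivial), whereas the paper's overlapping clique routes require the (easy) observation that it suffices for each of the first $M$ routes to contain \emph{some} dead channel. Conversely, the paper's single-TN construction sidesteps the $q$-versus-$M$ issue you had to address by appealing to repeated invocations; note in particular that your sentence assigning capacity at least $l$ to $TN_{i_{m+1}}\to t$ can clash with an earlier assignment when $q<m$ and the query sequence cycles, but since the lemma never asks for any route to succeed you can simply drop that clause.
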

\begin{proof}
    As in Lemma~\ref{lemma:efficiency}, we will build a topology in which the effectiveness of the RDA $A$ will be as needed.
    The topology is composed from $M+2$ nodes: $M$ nodes form a clique, one of the nodes in the clique is a TN, to which the source is connected, and the target is connected to other node from the clique.
    There are $\sum_{k=0}^{M-2} {M-2 \choose k}k!$ unique routes in the clique. $A$ is not aware to the availability of the channels in the topology, therefore it offers the routes in an order which is independent to it. Now, define the channels to be unavailable in any of the first $M$ routes.
    Therefore the first $M$ routes that $A$ will offer will be unavailable.
\end{proof}

Again, more specific networks can provide
better guarantees.

\begin{example}
In a scale-free network with $n$ nodes, 
where all the channels are bi-directional, and where $p$ 
is the probability that a channel has sufficient capacity, 
an $q$-RDA that queries the highest degrees nodes succeeds with 
a probability of at least $1-(1 - p^{ \frac{2 \cdot log(n)}{\log\log(n)}})^{n \cdot (1 - 2 ^ {-q})}$.
For example, if a channel accepts a route independently w.p. $p=0.2$, 
for a network of size $n=4000$ (e.g., Lightning), when the RDA queries only $q=5$ TNs, then the probability to get at least one effective route is $\ge 0.999$.
\end{example}

To see this, note that the diameter of this network is $\frac{\log(n)}{\log\log(n)}$ (following \cite{bollobas2004diameter}), therefore the length of the path from every source to every TN, and then from the TN to the target is bounded by $2 \cdot \frac{\log(n)}{\log\log(n)}$. Therefore, the probability for each TN's suggestion to route the transaction is at least $p^{2 \cdot \frac{\log(n)}{\log\log(n)}}$. Moreover, the number of paths from TN to a target is at least its degree, and the total degrees of the $q$ nodes with the largest degree is  $n \cdot (\frac{1}{2} + \frac{1}{4} + \cdots + \frac{1}{2^{q}}) = n \cdot (1 - 2^{-q})$.

\boldheader{Confidentially.}
To which extent can we avoid rate leakage
of the route discovery algorithm?
The following lemma shows that 
if nodes behave strategically, 
we cannot upper bound the leaking rate of a
route discovery algorithm.

\begin{lemma}
    For every $q$-RDA, and every $M \in \mathbb{R}$ there exists a topology in which the algorithm leaks at rate $\min\{M,q\}$.
\end{lemma}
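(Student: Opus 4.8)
The plan is to build, in the spirit of the proofs of Lemma~\ref{lemma:efficiency} and Lemma~\ref{lemma:effectiveness}, an adversarial ``spider'' instance: the source $s$ is joined by weight-$0$, high-capacity channels to $q+1$ trampoline nodes $u_1,\dots,u_{q+1}$, and each $u_i$ is joined by a weight-$1$ channel to its own private leaf node $v_i$; the target will be one of the $v_i$, fixed only after the algorithm is known. Because $s$ does not know the topology and the $q+1$ trampoline neighbors are a priori indistinguishable, a $q$-RDA $A$, as long as it keeps receiving ``no route'' answers, probes these neighbors in an order $u_{\pi(1)},u_{\pi(2)},\dots$ that is independent of which leaf turns out to be the target. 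Since $A$ must be correct whenever a route exists, and a route will exist within its query budget, it cannot abort; hence it issues at least its first $\min\{M,q\}$ queries.

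Given this order, I choose $t:=v_{\pi(\min\{M,q\})}$, i.e., I wire the instance so that the unique trampoline node adjacent to the target is exactly the one $A$ queries at step $\min\{M,q\}$ (possible since $\min\{M,q\}\le q$ and there are $q+1$ candidate leaves). This is where the strategic-behavior hypothesis enters and cannot be dropped: each of $u_{\pi(1)},\dots,u_{\pi(\min\{M,q\}-1)}$ knows the topology, but, acting selfishly, only advertises routes that run through itself, and none of them has such a route to $t$, so each answers ``no route'' while nonetheless learning of the transaction; a cooperative node would instead have redirected $s$ straight to $u_{\pi(\min\{M,q\})}$, and the leak would collapse to a constant. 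Thus $A$ is forced to probe $\min\{M,q\}$ distinct trampoline nodes before it obtains the (unique, hence optimal) route $s\to u_{\pi(\min\{M,q\})}\to t$ and routes the transaction successfully.

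To conclude, I count leakage. Routing this transaction along the shortest path $s\to u_{\pi(\min\{M,q\})}\to t$ (e.g., by source routing on full knowledge) exposes its existence to a single intermediate node, whereas the execution of $A$ exposed it to $\min\{M,q\}$ distinct trampoline nodes; hence $A$ leaks at rate $\min\{M,q\}$, and the single construction covers both the $M\le q$ and the $M>q$ regime.

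I expect the main obstacle to be making the ``pin the target after reading off the query order'' step fully rigorous: one must argue that $A$'s sequence of queries really is fixed before the target is revealed --- that $A$ cannot use the target's label or the uniform ``no route'' content of the early responses to shortcut the search or to waste probes re-querying nodes --- and that the instance remains self-consistent when $A$ is re-run on the chosen $t$. A secondary point is ensuring the leakage is genuinely incurred \emph{in order to route} the transaction rather than because $A$ gives up; this is why the route through $u_{\pi(\min\{M,q\})}$ is kept available, so that a correct $A$ does complete the routing after exactly $\min\{M,q\}$ probes.
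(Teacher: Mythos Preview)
Your construction is essentially the paper's own: a star with the source at the center, trampoline nodes as spokes, each owning a private leaf, and the target pinned adversarially to the leaf whose TN is probed last. The one notable difference is the case $M>q$: the paper places $M$ TNs and lets $A$ exhaust its $q$ queries \emph{unsuccessfully} (so the $\min\{M,q\}=q$ probes are incurred without ever routing), whereas you place $q+1$ TNs and pin the target at position $\min\{M,q\}\le q$ so that $A$ \emph{does} route after exactly that many probes. Your variant is arguably tighter with respect to the leak-rate definition (``in order to route a transaction''), but the underlying idea and the adversarial pinning argument are the same.
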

\begin{proof}
    The proof is 
		similar to the proof of Lemma~\ref{lemma:efficiency},
		however, rather than worrying about 
		the weight of the discovered route, 
		we only worry about the existence of a valid route. 
		Therefore, we will define our topology with
		a set of disconnected nodes,
		among the nodes in the outer circle (instead of clique).
    More specifically, our topology is simply a star topology, 
		where the source is in the center, its neighbors are $M$ TNs, 
		and their neighbors are the possible targets (and they are all sinks, i.e. without outgoing channels).
    Clearly, if the RDA $A$ will query a TN which is not the direct neighbor of the target, then it will return nothing (because there is no route). Let us choose our target to be the node that is connected to the $M$'th TN that $A$ queries. If $M<q$ then the $A$ will query $M$ TNs until it will find a valid route; 
		otherwise it will stop after querying $q$ nodes unsuccessfully.
    In either way, $A$ queries $\min\{M,q\}$ TNs.
\end{proof}

\begin{example}
In scale free networks, the number of queries that we need to perform in order to find a ``good" route is small.
\end{example}

For example, in terms of betweenness, in a scale free network with $n$ nodes, a node with degree 
$n \cdot 2^{-k}$ participates in $2^{-k}$ of the shortest routes \cite{kitsak2007betweenness}. Therefore if the highest degree nodes are TN, and we ask the top $k$, then the TN will be on the optimal route to the target w.p. $1 - 2^{-k}$. Therefore the number of queries that we need to do in order to find an optimal route is exponentially small compared to the number of nodes in the topology.


\section{Empirical Evaluation} \label{sec::experiments}

In order to complement our theoretical results and in order
to study the efficiency, effectiveness and confidentiality tradeoff
in real networks, we consider the Lightning network as a case study
for our experiments in the following. 
 
\subsection{Methodology}

Following the Lightning network RFC regarding trampoline nodes\footnote{
see \url{https://github.com/lightningnetwork/lightning-rfc/pull/654}}, we assume that a wallet node stores its local knowledge on close trampoline nodes (TN) in the network, and can search for them  within a close neighborhood. The wallet then queries some of the TNs for a route to the desired target, 
and will finally use the route with the lowest weight.

\noindent \textbf{Collected data.}
We collected data about the Lightning network
using a live Lightning node (lnd) which 
is connected to the mainnet through bitcoind.
To extract the network structure (currently, the whole topology is stored 
by all nodes) we use the command $lncli~ describegraph$.
The data used for this paper 
was retrieved on March 24th 2020,
and provides information 
about the public channels. 

\noindent \textbf{Topologies.}
We consider two topologies in our experiments: 
The first is the Lightning network snapshot, and 
the second is a synthetic 
sparse topology that was built using the following algorithm: 
create 1000 nodes in a circle, add one outgoing channel from each node to the node next to it, and another channel to other random node (total of 2 outgoing channels). The fee is the same for all the channels and equals to 1. 
Figures~\ref{figure::sparse_neighborhood}, \ref{figure::lightning_neighborhood} show the number of nodes as a function of the neighborhood size in the two topologies.

\begin{figure}
	\centering
	\includegraphics[scale=0.5]{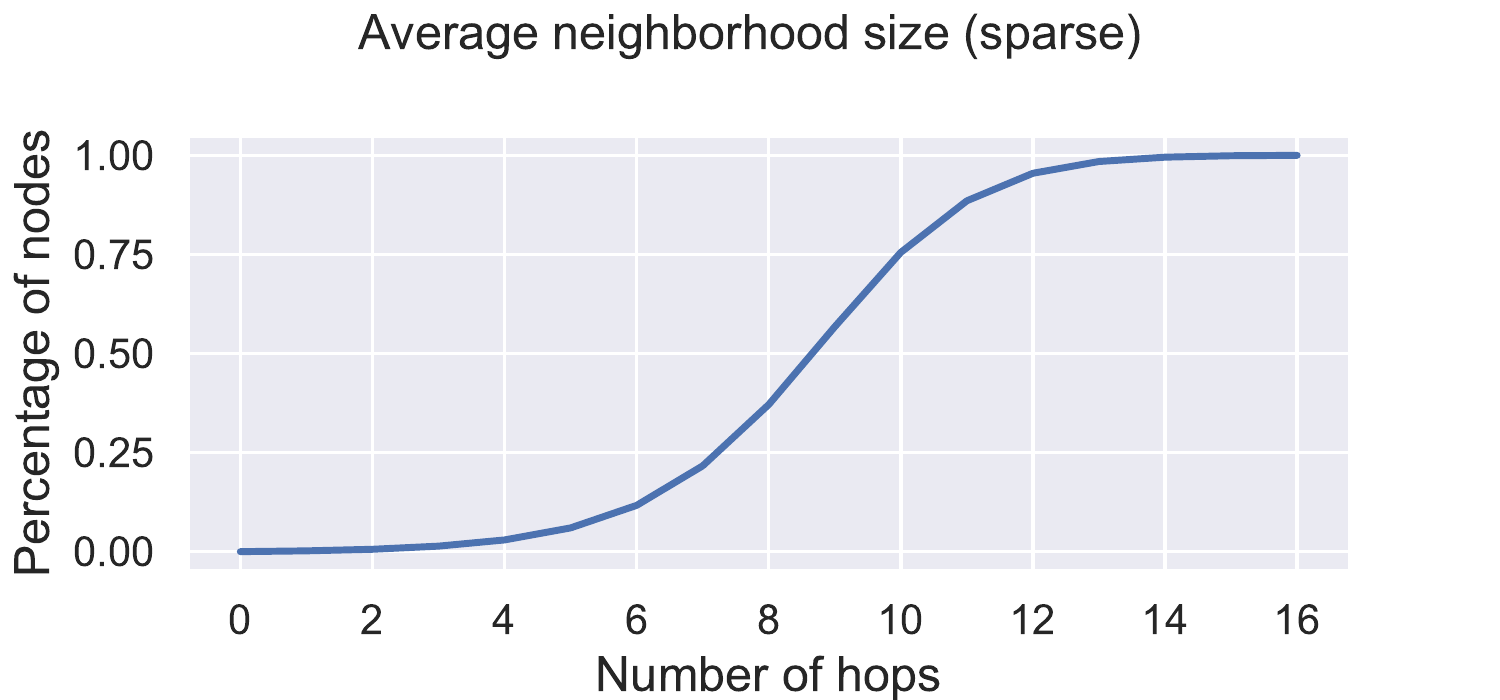}
	\caption{The percentage of nodes as a function of the neighborhood size in sparse topology.}
	\label{figure::sparse_neighborhood}
\end{figure}
\begin{figure}
	\centering
	\includegraphics[scale=0.5]{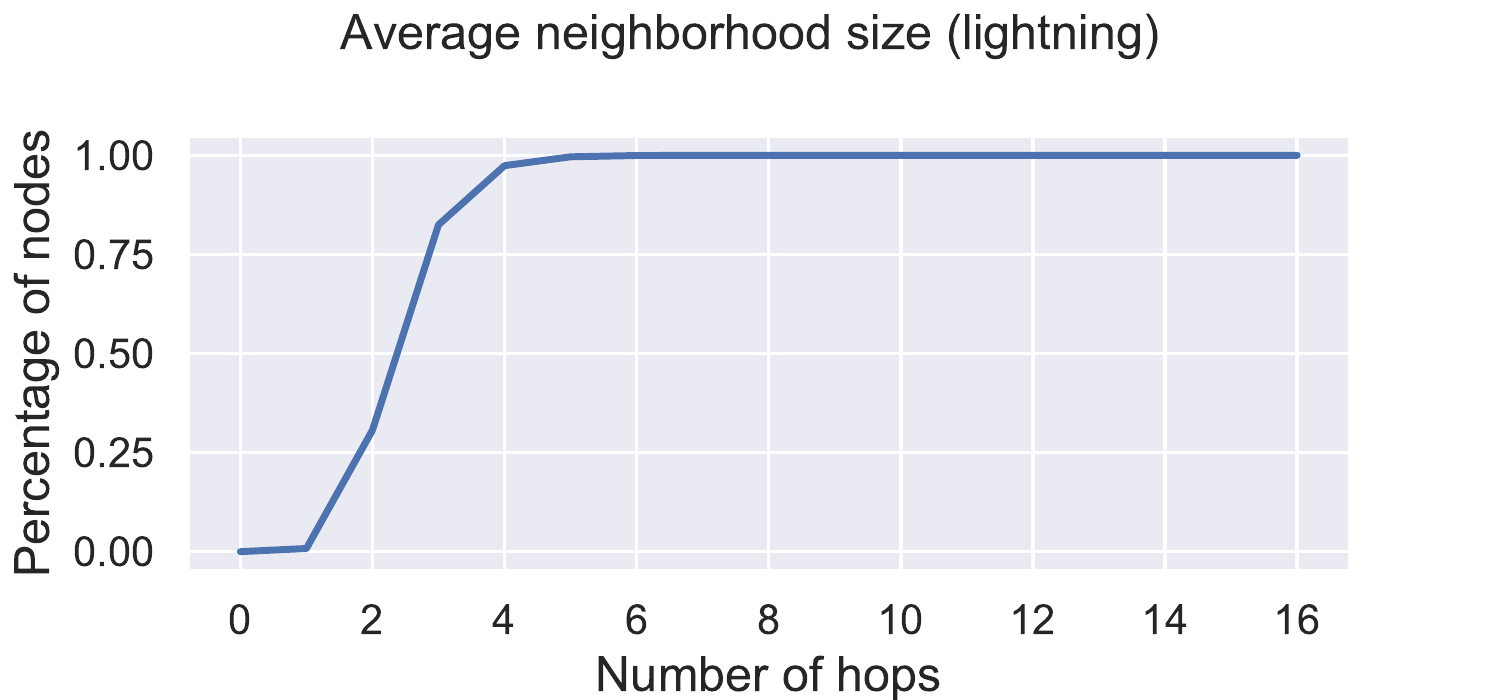}
	\caption{The percentage of nodes as a function of the neighborhood size in the Lightning network's topology.}
	\label{figure::lightning_neighborhood}
\end{figure}

\noindent \textbf{Graph weights.}
The weights of the edges in the Lightning network's graph are
determined by the transaction size. Each channel determines base and proportional fee, and the final fee is $base\_fee + transaction\_size \times proportional\_fee$. Moreover, the fees are calculated backward from the target to the source, because the nodes should pay the fee for transferring the fees to the later intermediate nodes. 
We decided to neglect this backward computation due to the fact that practically (following our previous works) it does not change the routes. In addition, we determined the transactions' size to be $10^{6}$ milisatoshis.
We finally determine the weight of each channel to be $base\_fee + 10^6 \times proportional\_fee$.

\noindent \textbf{Transactions distribution.}
In the following, we assumed that transactions follow two possible distributions: (i) one where all pairs of nodes in the network attempt to create a transaction and (ii) one where there are nodes with higher probability to execute a transaction (higher ``activity level"). In the latter case, we determine a power-law distribution and grant it uniformly to the nodes. In particular, we uniformly partition the nodes to groups of size 100, and give the $i$'th group an activity level of $2^{-i}$.
Note that we need to model the transactions because transactions in the Lightning network are private by design. It is hard to infer the real distribution of the transactions since (i) information about transactions is hidden in the private state of channels and since (ii) routes are obscured by onion encryption. 

\noindent \textbf{Implementation details.}
We used the Floyd–Warshall all pairs shortest path
algorithm to compute the optimal routes. 
Moreover, in order to keep shortest paths by limiting the number of neighborhood sizes (for example to search for the weight to the trampoline nodes in a neighborhood with a certain size), we used the ``min-plus matrix multiplication" (or ``distance product") algorithm, and stored the weights matrices for each number of hops. Finally, to find all the paths between specific source to target, we used the python module $networkx$.

\subsection{Tradeoff Evaluation}

\boldheader{Efficiency-Confidentiality Tradeoff}
We first evaluate the efficiency of the routes,
depending on the neighborhoods in which trampoline nodes are searched. 
We already know from Lemma~\ref{lemma:efficiency} that the efficiency 
can be low in the worst case, and we are now interested the
efficiency in our specific examined topologies. 
The efficiency-confidentiality tradeoff 
can help a wallet to decide on the neighborhood size that 
it should query in order to find an efficient route.

We first consider our 
synthetic sparse topology: 
Figure~\ref{figure::sparse_uniform_2d} 
shows that the efficiency 
in small neighborhoods is better only because we cannot find many routes
in this scenario. In the sparse topology this makes sense: 
there are less edges, therefore if there is a TN in a close neighborhood, then the optimal route goes through it with higher probability. 
For the Lightning network topology, in 
Figure~\ref{figure::lightning_uniform_2d}. 
we can see that nodes find more TNs in their close neighborhoods, 
but the weights are far from  optimal. 
One possible reason is that the topology is scale-free, 
therefore close neighborhoods are crowded, and 
the routes and the TN that the node finds are not on the optimal route.

\begin{figure}
	\centering
	\includegraphics[scale=0.5]{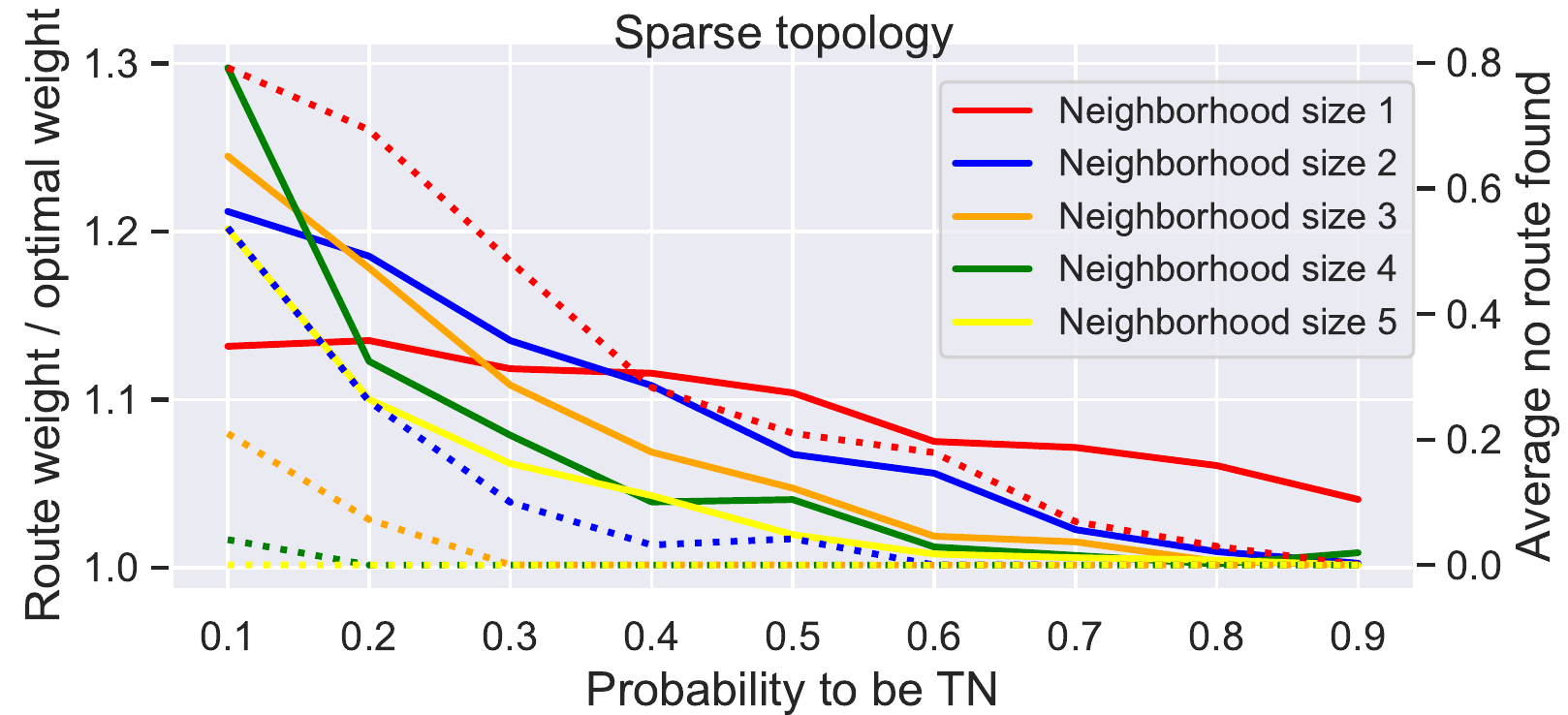}
	\caption{The probability to not find a route compared to the found route.}
	\label{figure::sparse_uniform_2d}
\end{figure}

\begin{figure}
	\centering
	\includegraphics[scale=0.5]{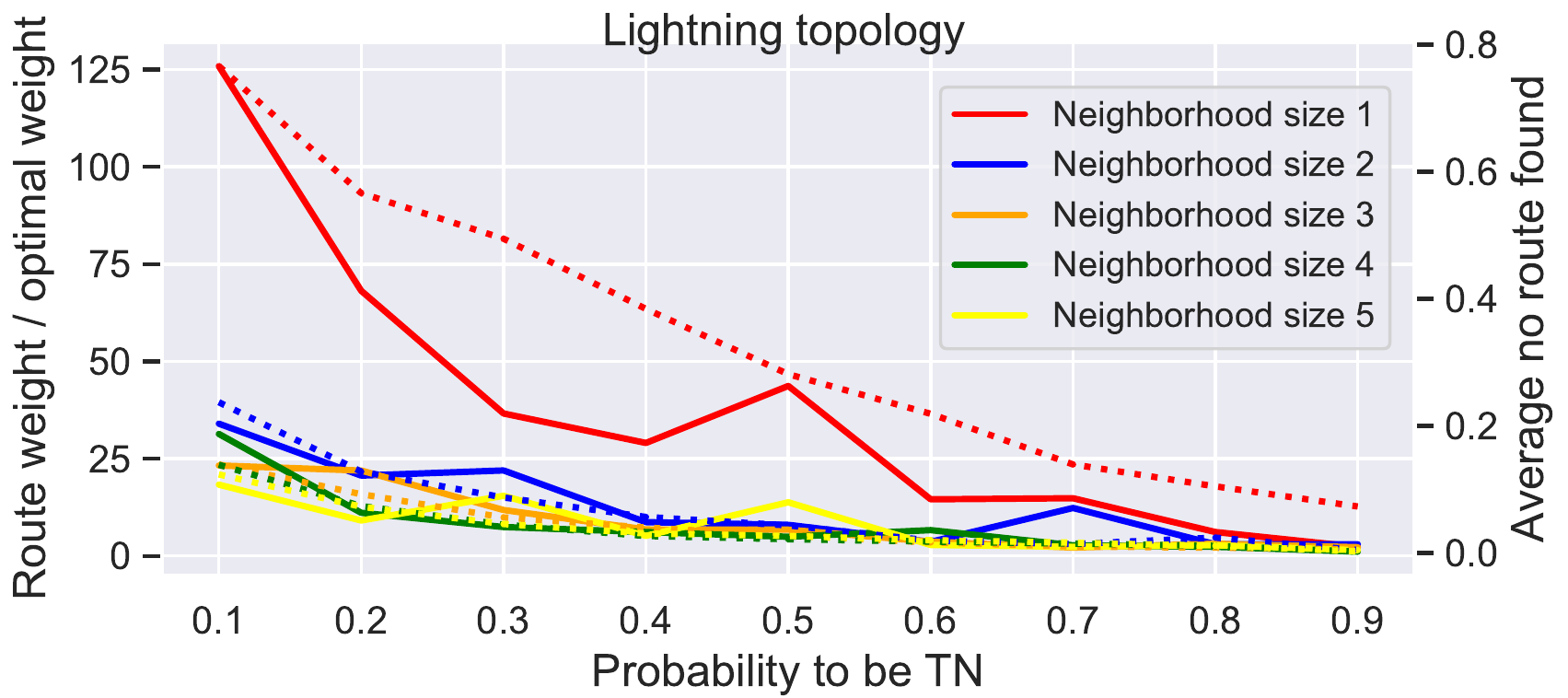}
	\caption{The probability to not find a route compared to the found route.}
	\label{figure::lightning_uniform_2d}
\end{figure}

\boldheader{Effectiveness-Confidentiality Tradeoff}
We next examine the RDA's effectiveness, i.e.,
the case in which the routes may be unavailable. 
As we showed in Lemma~\ref{lemma:effectiveness}, 
highly ineffective queries may result
when TNs offer unavailable routes, in which case 
the wallet will have to query many TNs and lose confidentiality. 
The next experiment explores this tradeoff by parametrizing, 
as before, the neighborhood size from which the wallet chooses the TN, 
and by the percentage of TNs in the network. We expect to see that as 
there are more TNs, the number of routes not found decreases (we will be offered more routes) and the number of queries will increase. 
For each pair of nodes in the topology, we ordered randomly all the TNs in the relevant neighborhood, 
and queried them one by one, each for $5$ routes. Only if all the $5$ routes failed, then we queried another one.

As we discussed before, the transactions in the Lightning network are private,
both in terms of the participating parties and the transaction size. 
In our experiment we hence simulate the unavailability of the nodes by considering only the lack of liquidity. For each channel, we simulated the occupied liquidity using a random variable $v \sim Uniform[0,1]$, and defined the already-locked liquidity to be $v \cdot tx_{size} \cdot factor$, while $tx_{size}$ is the transaction size that we try to route, which is $10^6$ millisatoshis. 
The x-axis in the graphs indicate this factor. 

In Figure~\ref{figure::sparse_priv_effectiveness}, we fixed the probability 
at which a channel will accept the transaction to $60\%$, and examined 
the number of queries that the wallet will have to execute before finding a route that accepts the transaction. We see that when we query bigger neighborhoods, the number of queries increases due to longer routes (which decreases the availability), but also more routes are found because there are more TN 
to query (and thus more \textit{disjoint routes}, i.e. different channels).

Figure \ref{figure::lightning_priv_effectiveness} shows this tradeoff 
on the Lightning network. As in the sparse network, we see that larger neighborhoods will result in more available routes, but at the cost of more queries. Note the curvature of the graph, compared to the sparse topology, which shows that the number of queries does not increase as fast when there are more TNs. We consider this as a result of the higher degree in the lightning network; the 3-neigborhood 
contains almost the entire graph, thus the TNs will yield disjoint routes which increases the probability to availability.

\begin{figure}
	\centering
	\includegraphics[scale=0.5]{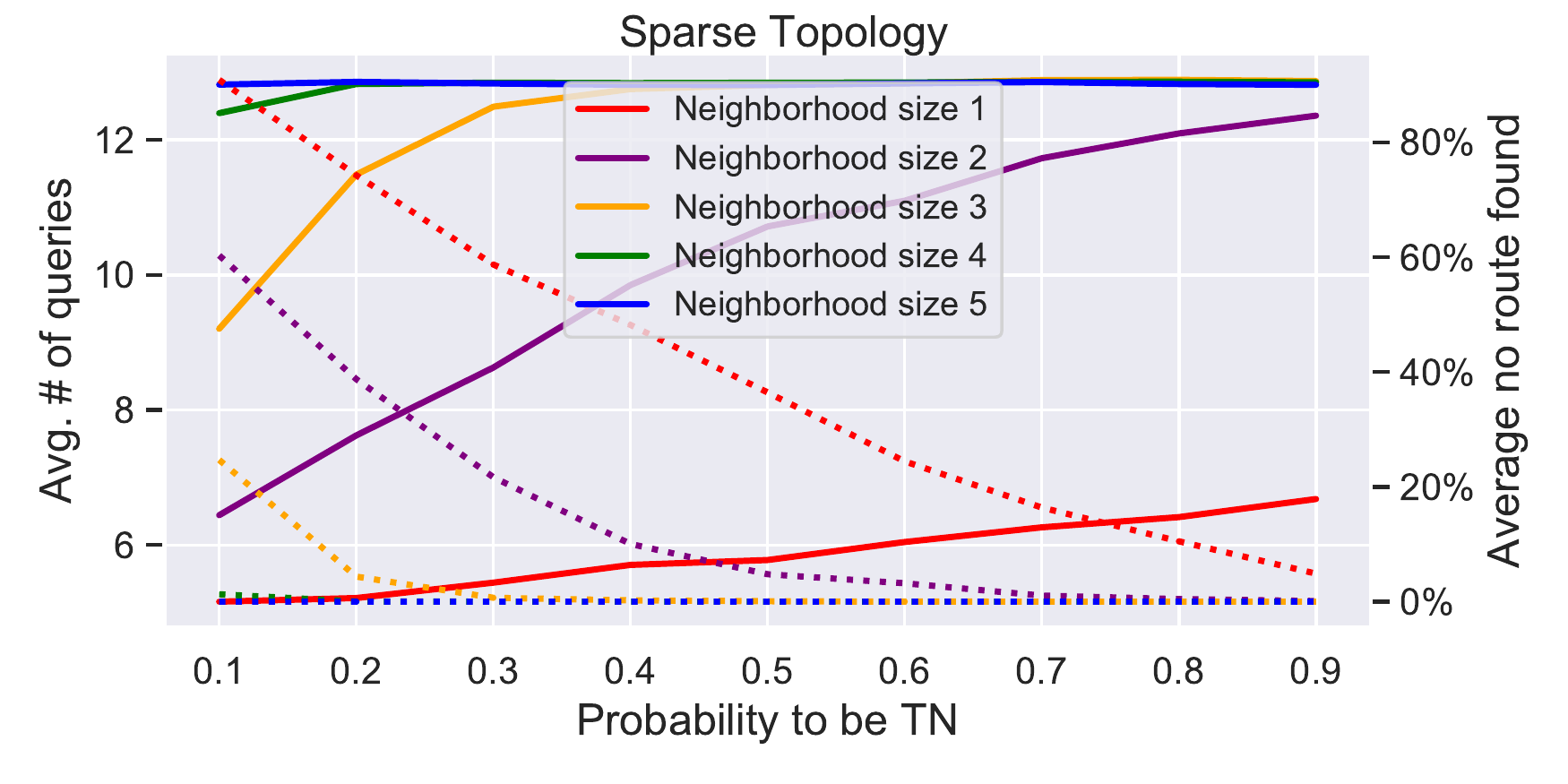}
	\caption{Privacy-effectiveness tradeoff.}
	\label{figure::sparse_priv_effectiveness}
\end{figure}
\begin{figure}
	\centering
	\includegraphics[scale=0.5]{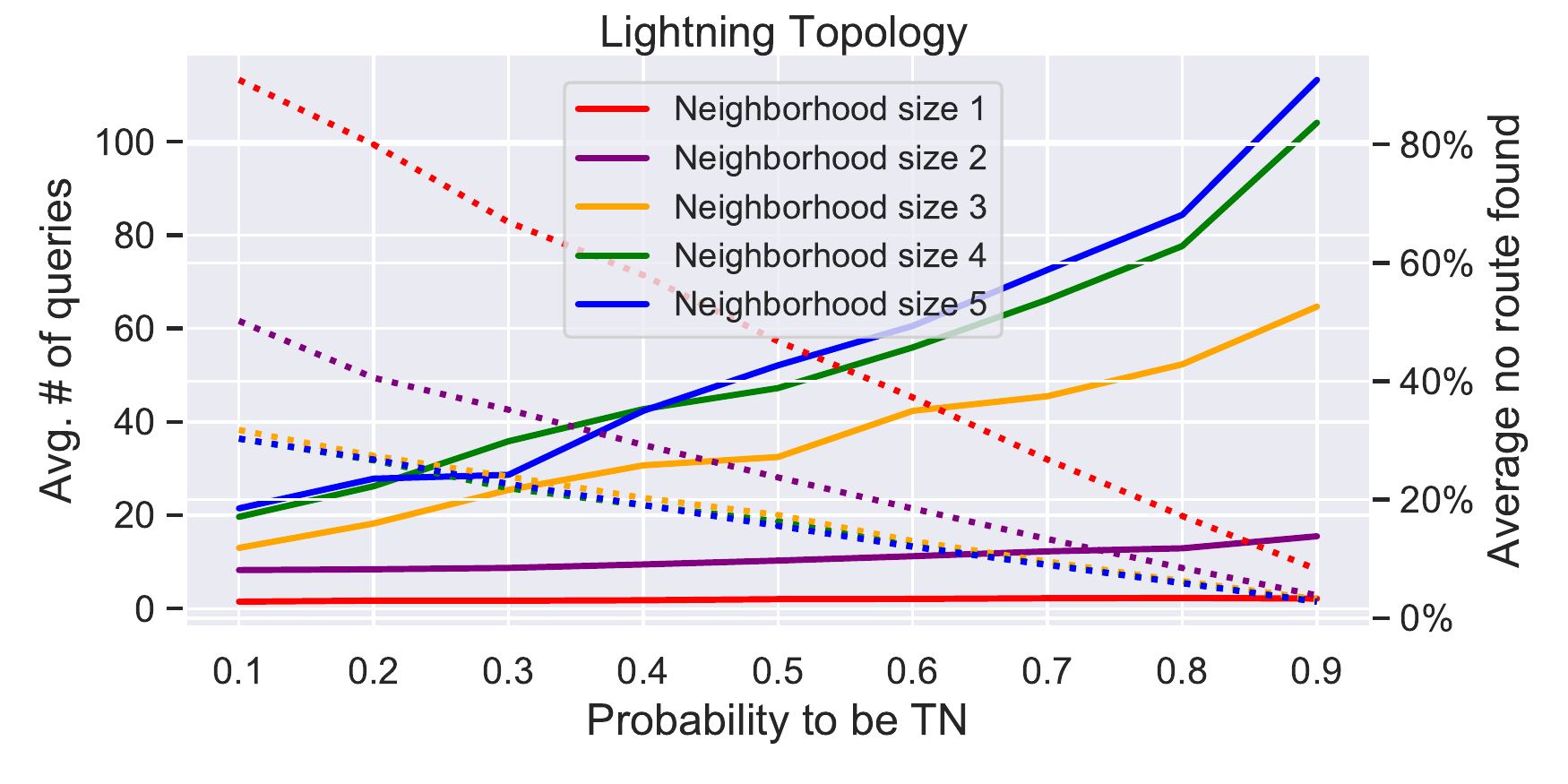}
	\caption{Privacy-effectiveness tradeoff.}
	\label{figure::lightning_priv_effectiveness}
\end{figure}

\boldheader{Effectiveness-Efficiency Tradeoff}
The last tradeoff discusses the average route weight when assuming that the routes could be unavailable. In this experiment, we tried to route through the 10 shortest paths, and checked the average weight and the number of pairs that did not find any route.

Figure~\ref{figure::sparse_fee_effectiveness} shows the tradeoff between the fee and the effectiveness of the route. Unlike the previous section, where we stopped querying after we found a valid route, in here we continued to ask all the TNs, and deduce the percentage of routes that we cannot use, and the average fee of the one that we can. 
This figure also reveals an interesting phenomenon in which the fee decreases when the probability for availability decreases. This happens because the available routes become shorter, and thus the average fee decreases.

Figure~\ref{figure::lightning_fee_effectiveness} shows this tradeoff in the Lightning network. We follow the same methodology of the previous section, in which each channel accepts a transaction randomly, as a function of the transaction size, a uniform generated number, and a factor. It is interesting to note that the last phenomenon that we described on the sparse topology does not hold here. The fee increases due to the larger number of pairs that succeed to transact. 
This might suggest that either 
there is a single route and if it is not available, then 
the transaction cannot be executed; or  there are many different 
routes with approximately the same weight, which makes the unavailability of some routes 
have a smaller effect on the route efficiency.

\begin{figure}
	\centering
	\includegraphics[scale=0.5]{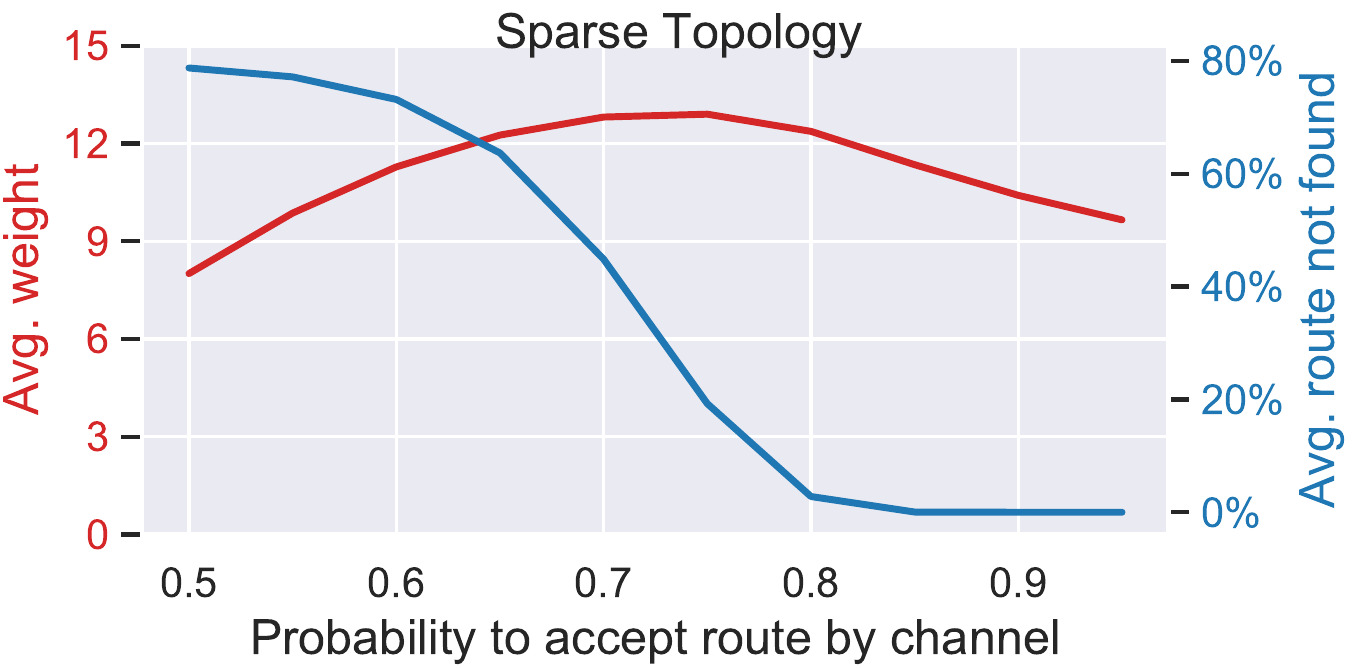}
	\caption{Efficiency-effectiveness tradeoff in the sparse topology.}
	\label{figure::sparse_fee_effectiveness}
\end{figure}
\begin{figure}
	\centering
	\includegraphics[scale=0.5]{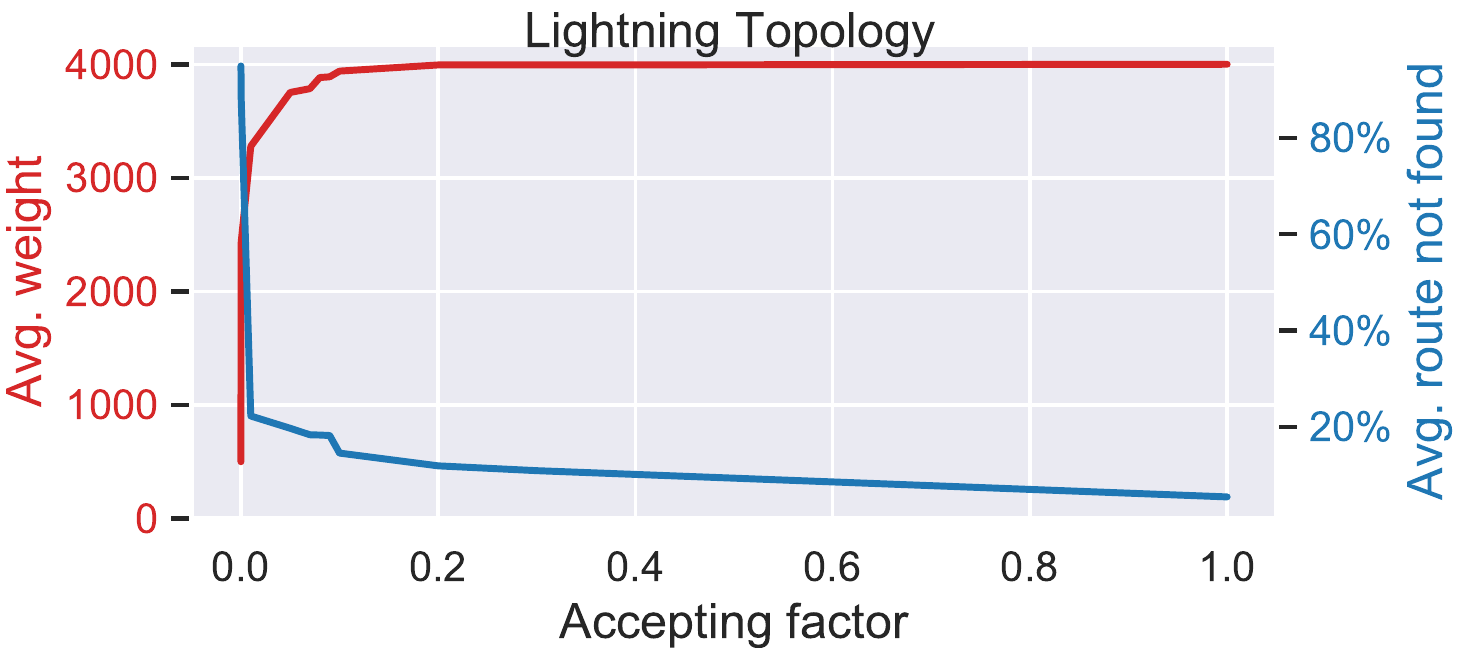}
	\caption{fee-effectiveness tradeoff in the Lightning network topology.}
	\label{figure::lightning_fee_effectiveness}
\end{figure}

\subsection{Extensions}

We next explore a generalization of
trampoline nodes, which may further
improve scalability. 
We observe that in principle,
nodes can answer path request queries 
also if they just used the path. 
The intuition is that while nodes will not remember 
the entire topology 
they may still have a 
route cache.
Let us hence consider ``Partial Nodes'' (PN),
nodes that share their past knowledge using the same selfish mechanisms 
like TNs, when answering path request queries.

Note that in this extension, the wallet nodes will not 
benefit from a better effectiveness (just like the TNs, the PNs do
not know the availability of the channels) or confidentiality 
(because they still query the same number of intermediate nodes). 
The major improvement that partial nodes will contribute to the network is the efficiency of the resulting
routes. The are more nodes in the network that share the optimal routes to the target, 
therefore the ``detour" of the route through the TN/PN will be smaller. Figure~\ref{figure::lightning_partial_result_ratio} assumes that each PN stores paths to 50 uniformly selected nodes. 
Here we fixed the number of TNs and the neighborhood size to obtain 
a clearer view of the improvement in the resulting routes.
On the other hand, Figure~\ref{figure::sparse_partial_result_ratio} 
shows the limited benefits of ``partial nodes" on the sparse topology. 
The effectiveness is low due to the small number of different routes to the target.

\begin{figure}
	\centering
	\includegraphics[scale=0.5]{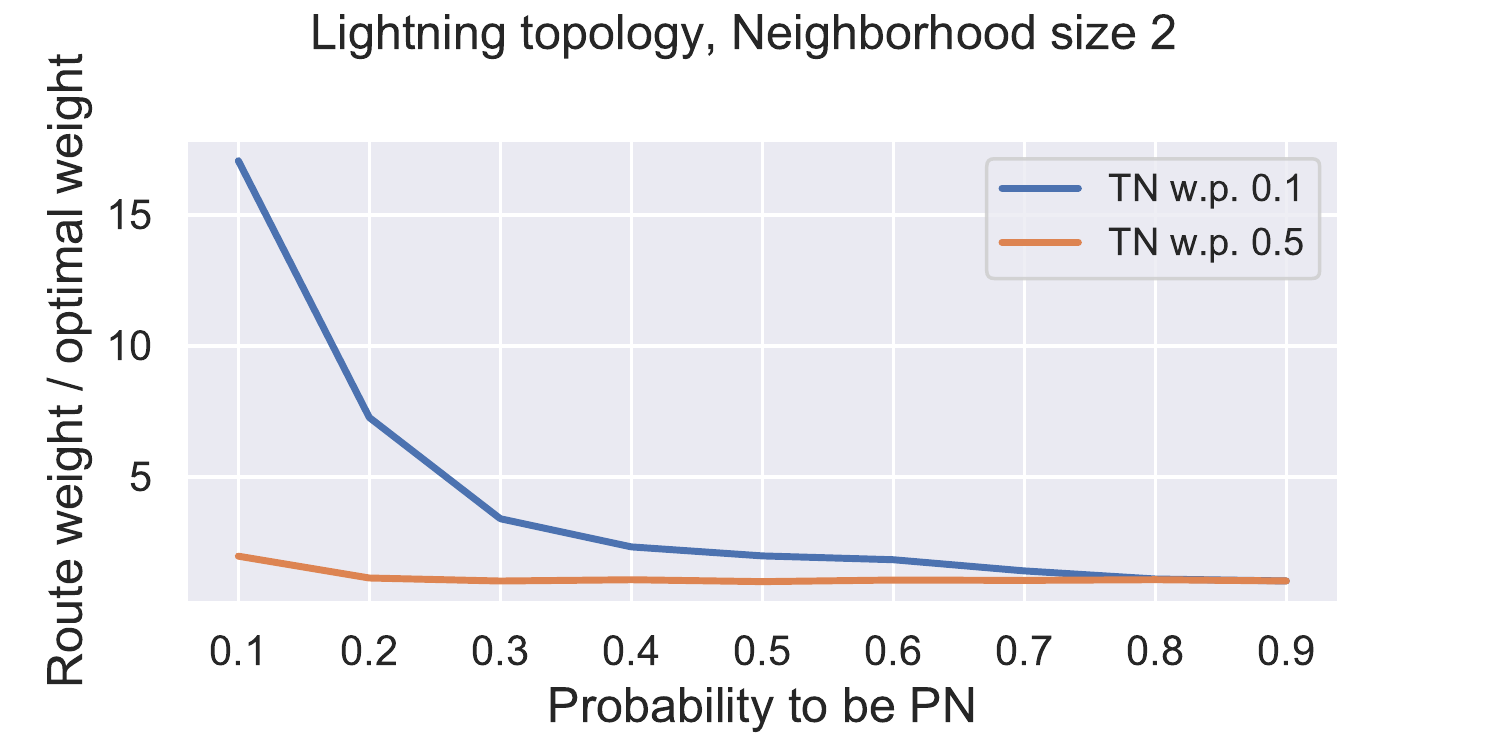}
	\caption{Ratio of the resulting route and the optimal 
	with Partial Nodes (the Lightning network topology).}
	\label{figure::lightning_partial_result_ratio}
\end{figure}
\begin{figure}
	\centering
	\includegraphics[scale=0.5]{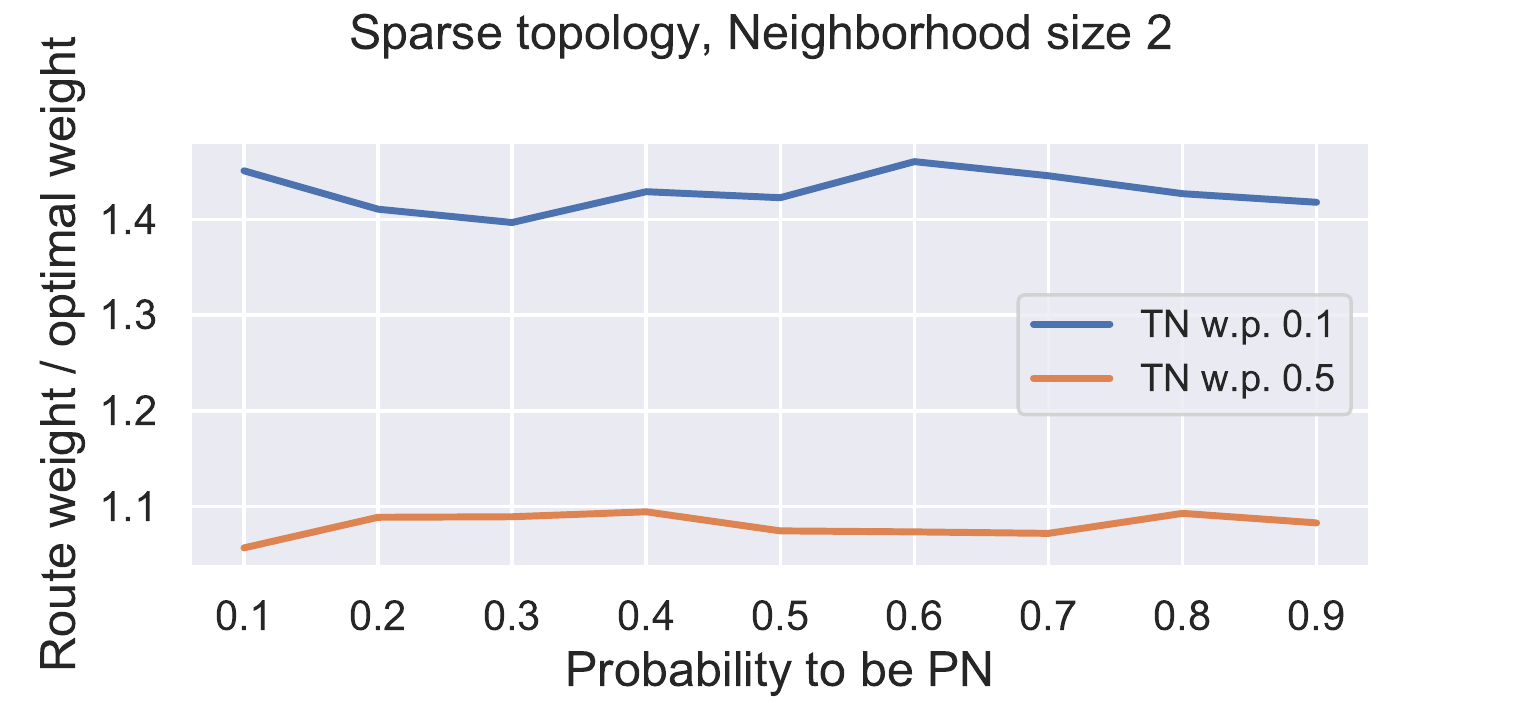}
	\caption{Ratio between the resulting route and the optimal 
	route with Partial Nodes (the sparse topology).}
	\label{figure::sparse_partial_result_ratio}
\end{figure}

Finally, it is also interesting to consider
the impact of altruistic nodes, i.e. nodes that hold
the entire topology and answer route queries unselfishly. In this case, all the nodes in the network can simply query the altruistic nodes, and get the optimal result without any cost beside the possible lack of privacy to a small number of entities.

\section{Related Work}  \label{sec::related_work}

Aspects of route discovery algorithms in off-chain networks were described in a SoK by Gudgeon et al. \cite{gudgeon2019sok},
identifying effectiveness, efficiency, scalability, cost-effectiveness 
and privacy as key challenges.
Following this, many suggestions were made in order to find a route discovery algorithm. Few among these suggestions are SpeedyMurmurs \cite{roos2017settling} and SilentWhispers \cite{malavolta2017silentwhispers} which focus on the effectiveness and efficiency (SpeedyMurmurs's simulations show that they find a route by up to two orders of magnitude faster then the existing algorithms, while maintaining the same success ratio). Unfortunately, in real-world networks, they face the challenge of cost-effectiveness that is driven by the selfishness of the nodes.

Another suggestion was made in SpiderNetwork \cite{sivaraman2018routing}, where the payments are split into units and the route discovery algorithm routes each of them individually (similar to packet switched network). This method however does not account for privacy and cost-effectiveness in selfish models. In MAPPCN \cite{tripathymappcn} the authors suggest and analyze a route discovery algorithm 
which preserves privacy but which does not account of the leaking
issue addressed in this paper. 

Further related research 
exists in other networks than off-chain networks,
e.g., on selfish routing in wireless networks, however,
there selfish nodes typically wish to avoid participating in forwarding packets in order to not waste energy. Another example is gaining source-target privacy in multi-hop networks such as TOR \cite{dingledine2004tor}.

From an economical perspective, much research exists on
markets with producers that offers digital goods (i.e. products which are infinitely expansible \cite{quah2003digital}) and consumers with search friction (i.e. cost in addition to the price \cite{anderson1999pricing}). One example is Pandora's problem \cite{weitzman1979optimal} wherein Pandora needs to pay a
fee to open boxes (which is equivalent in our model to loss of privacy or additional networking costs), and each box offers different value.

In general, our analysis is in the spirit of classic models such as
\cite{anderson1999pricing}, as we can model the
interaction between the TN (who sells an item or service) 
and the wallets that consume it. 
Many different scenarios of this game were researched previously (\cite{au2018competition}, \cite{choi2018consumer}), and the general result is that the pricing in the network will rise with the search friction.

\section{Conclusion \& Future Work} \label{sec::conculsions}

We modeled, analyzed and empirically evaluated the tradeoff
between efficiency, effectiveness and privacy of
route discovery in offchain networks which come
with scalability requirements and where node behave selfish.
In particular, we have shown that current solutions
can be inefficient in general, which raises interesting
avenues for future research.
Another interesting direction for future research
regards the exploration of further strategic behaviors.

We see a trampoline node as an interesting player with different economical 
incentives compared to the other nodes in the network.
While the wallet has a search friction that is 
based on the number of queries that it needs to perform and the 
resulting privacy loss, 
the TNs may offer different prices that may be changed according to the wallet's strategy. 
The study of the resulting strategic behaviors and games may provide interesting
insights, e.g., on whether overestimating the value of privacy 
will motivate the TN to increase the fee of the offered route.
Moreover, the interaction between the TNs (the ``sellers") and the wallets (the ``buyers") 
falls into the area of trading ``digital goods", i.e. unbounded amount of products to sell, 
which is well researched. As far as we know, this research was never combined with search-friction 
for the buyers, as in our case. This indeed makes sense in ``traditional" digital goods 
(such as servers that offer files to download), but when considering privacy and bandwidth, 
as in this paper, this could be a very interesting extension.
It could further be interesting to 
broaden the discussion to strategic behaviors which lead
TNs to route transactions through worse routes in order to manage liquidity in the TN's channels.

\bibliographystyle{plain}
\bibliography{references}

\begin{thebibliography}{10}

\bibitem{anderson1999pricing}
Simon~P Anderson and Regis Renault.
\newblock Pricing, product diversity, and search costs: A
  bertrand-chamberlin-diamond model.
\newblock {\em The RAND Journal of Economics}, pages 719--735, 1999.

\bibitem{au2018competition}
Pak~Hung Au.
\newblock Competition in designing pandora's boxes.
\newblock {\em Available at SSRN 3141387}, 2018.

\bibitem{bollobas2004diameter}
B{\'e}la Bollob{\'a}s and Oliver Riordan.
\newblock The diameter of a scale-free random graph.
\newblock {\em Combinatorica}, 24(1):5--34, 2004.

\bibitem{choi2018consumer}
Michael Choi, Anovia~Yifan Dai, and Kyungmin Kim.
\newblock Consumer search and price competition.
\newblock {\em Econometrica}, 86(4):1257--1281, 2018.

\bibitem{de2010practical}
Emiliano De~Cristofaro and Gene Tsudik.
\newblock Practical private set intersection protocols with linear complexity.
\newblock In {\em International Conference on Financial Cryptography and Data
  Security}, pages 143--159. Springer, 2010.

\bibitem{dingledine2004tor}
Roger Dingledine, Nick Mathewson, and Paul Syverson.
\newblock Tor: The second-generation onion router.
\newblock Technical report, Naval Research Lab Washington DC, 2004.

\bibitem{homenc}
Craig Gentry.
\newblock {\em A fully homomorphic encryption scheme}.
\newblock PhD thesis, Stanford University, 2009.
\newblock \url{crypto.stanford.edu/craig}.

\bibitem{gudgeon2019sok}
Lewis Gudgeon, Pedro Moreno-Sanchez, Stefanie Roos, Patrick McCorry, and Arthur
  Gervais.
\newblock Sok: Off the chain transactions.
\newblock {\em IACR Cryptology ePrint Archive}, 2019:360, 2019.

\bibitem{hu2014private}
Haibo Hu, Jianliang Xu, Xizhong Xu, Kexin Pei, Byron Choi, and Shuigeng Zhou.
\newblock Private search on key-value stores with hierarchical indexes.
\newblock In {\em 2014 IEEE 30th International Conference on Data Engineering},
  pages 628--639. IEEE, 2014.

\bibitem{kitsak2007betweenness}
Maksim Kitsak, Shlomo Havlin, Gerald Paul, Massimo Riccaboni, Fabio Pammolli,
  and H~Eugene Stanley.
\newblock Betweenness centrality of fractal and nonfractal scale-free model
  networks and tests on real networks.
\newblock {\em Physical Review E}, 75(5):056115, 2007.

\bibitem{lazar2018karaoke}
David Lazar, Yossi Gilad, and Nickolai Zeldovich.
\newblock Karaoke: Distributed private messaging immune to passive traffic
  analysis.
\newblock In {\em 13th $\{$USENIX$\}$ Symposium on Operating Systems Design and
  Implementation ($\{$OSDI$\}$ 18)}, pages 711--725, 2018.

\bibitem{lazar2016alpenhorn}
David Lazar and Nickolai Zeldovich.
\newblock Alpenhorn: Bootstrapping secure communication without leaking
  metadata.
\newblock In {\em 12th $\{$USENIX$\}$ Symposium on Operating Systems Design and
  Implementation ($\{$OSDI$\}$ 16)}, pages 571--586, 2016.

\bibitem{malavolta2017silentwhispers}
Giulio Malavolta, Pedro Moreno-Sanchez, Aniket Kate, and Matteo Maffei.
\newblock Silentwhispers: Enforcing security and privacy in decentralized
  credit networks.
\newblock In {\em NDSS}, 2017.

\bibitem{morais2018zero}
Eduardo Morais, Cees van Wijk, and Tommy Koens.
\newblock Zero knowledge set membership.
\newblock 2018.

\bibitem{quah2003digital}
Danny Quah.
\newblock Digital goods and the new economy.
\newblock 2003.

\bibitem{roos2017settling}
Stefanie Roos, Pedro Moreno-Sanchez, Aniket Kate, and Ian Goldberg.
\newblock Settling payments fast and private: Efficient decentralized routing
  for path-based transactions.
\newblock {\em arXiv preprint arXiv:1709.05748}, 2017.

\bibitem{sivaraman2018routing}
Vibhaalakshmi Sivaraman, Shaileshh~Bojja Venkatakrishnan, Mohammad Alizadeh,
  Giulia Fanti, and Pramod Viswanath.
\newblock Routing cryptocurrency with the spider network.
\newblock In {\em Proceedings of the 17th ACM Workshop on Hot Topics in
  Networks}, pages 29--35, 2018.

\bibitem{tripathymappcn}
Somanath Tripathy and Susil~Kumar Mohanty.
\newblock Mappcn: Multi-hop anonymous and privacy-preserving payment channel
  network.
\newblock 2020.

\bibitem{tyagi2017stadium}
Nirvan Tyagi, Yossi Gilad, Derek Leung, Matei Zaharia, and Nickolai Zeldovich.
\newblock Stadium: A distributed metadata-private messaging system.
\newblock In {\em Proceedings of the 26th Symposium on Operating Systems
  Principles}, pages 423--440, 2017.

\bibitem{icissp20}
Stefan~Schmid Utz~Nisslmueller, Klaus-Tycho~Foerster and Christian Decker.
\newblock Toward active and passive confidentiality attacks on cryptocurrency
  off-chain networks.
\newblock In {\em Proc. 6th International Conference on Information Systems
  Security and Privacy (ICISSP)}, 2020.

\bibitem{weitzman1979optimal}
Martin~L Weitzman.
\newblock Optimal search for the best alternative.
\newblock {\em Econometrica: Journal of the Econometric Society}, pages
  641--654, 1979.

\end{thebibliography}

\end{document}